\documentclass[conference]{IEEEtran}

\usepackage[T1]{fontenc}
\usepackage{amsmath,amssymb,amsfonts}
\usepackage{newtxtext,newtxmath}
\newtheorem{definition}{Definition}
\newtheorem{theorem}{Theorem}
\usepackage{algorithmic}
\usepackage{algorithm}
\usepackage{graphicx}
\usepackage{textcomp}
\usepackage{xcolor}
\usepackage{booktabs}
\usepackage{multirow}
\usepackage[hidelinks]{hyperref}
\usepackage{cleveref}
\usepackage{microtype}
\usepackage{tikz}
\usetikzlibrary{arrows.meta,positioning,shapes.geometric,fit}
\usepackage{pgfplots}
\pgfplotsset{compat=1.18}
\graphicspath{{./}}

\begin{document}

\title{Confidence-Gated Admission for Hardware Prefetching:\\When the Gate Matters More Than the Predictor}

\author{
\IEEEauthorblockN{Youssef Majdane}
\IEEEauthorblockA{Algorithmica Solutions\\youssef@algorithmicasolutions.com}
\and
\IEEEauthorblockN{Simone Jarno Casartelli}
\IEEEauthorblockA{Algorithmica Solutions\\simone@algorithmicasolutions.com}
\and
\IEEEauthorblockN{Enrico Lopedoto}
\IEEEauthorblockA{Algorithmica Solutions\\info@algorithmicasolutions.com}
}

\maketitle

\begin{abstract}
Learned cache prefetchers are typically evaluated against classical
predictors that always issue requests, confounding the prediction model
with the admission policy. We disentangle these variables with matched
controls: the same admission gate is applied to both a 257-parameter
online MLP and a classical stride predictor. The neural advantage
vanishes; the MLP is indistinguishable from gated stride on random
traffic and slower on most regular streams. The gate itself is
architecturally useful independent of the predictor: on twenty SPEC
CPU2017 programs in native ChampSim, it removes 35\% of prefetches and
improves accuracy from 11\% to 15\%, but DRAM reads change by only
$-0.07\%$---demonstrating that proxy metrics do not predict endpoint
behavior. We prove gate-closed execution reproduces the no-prefetch
baseline exactly. The gate matters more than the predictor,
and better proxies do not imply better endpoints.
\end{abstract}

\begin{IEEEkeywords}
cache prefetching, confidence gate, matched controls, admission policy,
safety property, ChampSim, reproducibility
\end{IEEEkeywords}

\section{Introduction}
\label{sec:intro}

The memory wall has not moved: an L1 miss costs tens of cycles, and for
memory-bound code that cost dominates. Prefetching attacks the problem by
predicting the next addresses and bringing their lines into the cache before
they are requested~\cite{wulf1995,chen1995}. Decades of work produced a
robust toolbox for the easy cases: next-line prefetching, stride
detection~\cite{chen1995}, signature-path lookahead~\cite{kim2016spp},
spatial/best-offset units~\cite{michaud2016bop}, and Bingo spatial
prefetching~\cite{bingo}. On sequential, strided, and
gather-within-a-stream traffic these mechanisms are close to optimal.

\textbf{The learned-prefetching wave.}
A recent generation of neural prefetchers has pushed beyond heuristics.
Pythia~\cite{pythia} cast prefetching as online reinforcement learning.
TransFetch~\cite{transfetch} applied transformers with fine-grained address
segmentation. Hashemi et al.~\cite{hashemi2018} demonstrated that LSTMs
can learn irregular memory access patterns.
DART~\cite{dart} achieved hardware-practical inference through
tabularization at IPDPS~2024, and Pathfinder~\cite{pathfinder} demonstrated
real-time online learning with delta patterns at ASPLOS~2024.

\textbf{The confounded evaluation.}
These systems share a consequential evaluation pattern: a learned prefetcher
with an implicit or explicit confidence mechanism is compared against a
classical unit that always issues requests. This confounds two independent
variables---the address-prediction model and the admission policy. A gated
learned predictor that stays silent on random traffic looks ``safer'' than an
always-on stride unit, but the safety comes from the gate, not from the
neural network. Prior throttling mechanisms such as Feedback-Directed
Prefetching~\cite{fdp2007} adjust aggressiveness based on accuracy feedback,
but apply a single policy to one predictor at a time.
No prior work we have found uses a matched-control design that applies the
\emph{same} admission policy to both a learned and a classical predictor
before drawing conclusions about the predictor's value.

\textbf{Why it matters now.}
Real workloads are rarely all-or-nothing. An LLM inference loop cycles
between sequential KV-cache appends and far attention
gathers~\cite{kwon2023vllm,pope2023}; a retrieval agent alternates long
context scans with scattered lookups; embeddings are dense rows reached by
sparse indices. A prefetcher that cannot distinguish ``regular, with the
occasional far access'' from ``pure noise'' will misbehave on exactly these
workloads---and an evaluation that credits the prediction model for what the
gate delivers will misguide architectural decisions.

\textbf{Contributions.} We make four contributions:
\begin{enumerate}
\item \textbf{Methodology.} A matched-control falsification that applies the
same confidence and regularity gates to both the MLP and a stride predictor,
isolating prediction from admission (\cref{sec:matched}).
\item \textbf{Architecture.} A confidence gate (median-error) and a
regularity gate (dominant-delta support) that suppress prefetching on
unlearnable streams, useful independent of the predictor. We prove that
gate-closed execution equals the baseline (Theorem~\ref{thm:safety}) and
verify empirically that the gate closes on all tested unlearnable
streams (\cref{sec:safety}).
\item \textbf{External validation.} A full twenty-program SPEC CPU2017 study
in native ChampSim (eleven development, nine frozen holdout), with
predictions registered before observation, showing that proxy-metric
improvements (fewer prefetches, higher accuracy) do not translate to
endpoint gains (DRAM reads, IPC) (\cref{sec:champsim}).
\item \textbf{Reproducibility.} Every claim is machine-verifiable:
\texttt{verify\_paper.py} checks local results,
\texttt{verify\_research.py} checks the full experimental matrix, and
\texttt{champsim/verify\_results.py} reconstructs native aggregates from
raw logs (Reproducibility Statement, below).
\end{enumerate}

\section{Background and Related Work}
\label{sec:related}

\subsection{Classical Hardware Prefetching}

Hardware prefetching is dominated by pattern-matching heuristics. Stride and
sequence detection~\cite{chen1995} handle regular access patterns. SPP
(Signature Path Prefetcher)~\cite{kim2016spp} uses compressed address
signatures to look ahead along correlated paths. Best-offset
prefetching~\cite{michaud2016bop} learns the optimal constant offset that
maximizes timeliness. Bingo~\cite{bingo} captures spatial correlations
within physical pages. These excel on their target streams and are known to
be fragile when their assumptions break---the gap we formalize with the
no-prefetch baseline comparison.

\subsection{Learned Prefetching}

Hashemi et al.~\cite{hashemi2018} showed that LSTMs can learn irregular
memory access patterns, establishing neural prefetching as a research
direction. Pythia~\cite{pythia} cast prefetching as online reinforcement
learning, training a customizable agent with cache-hit reward signals at
MICRO~2021; its 64\,KB state table and offline training on
ChampSim~\cite{champsim} traces achieve strong coverage.
TransFetch~\cite{transfetch} applied attention-based address segmentation
with variable-degree prefetching at Computing Frontiers~2022.
DART~\cite{dart} achieved hardware-practical inference by converting
attention models into compact table hierarchies via distillation
(IPDPS~2024). Pathfinder~\cite{pathfinder} demonstrated practical real-time
online learning with delta patterns at ASPLOS~2024.
Berti~\cite{berti} is an accurate local-delta prefetcher that achieves
high coverage without global history, demonstrating that well-engineered
classical predictors remain a strong baseline---a point our
matched-control results reinforce.

\textbf{Positioning.} We share the online-learning ingredient with Pythia,
the delta-based approach with Pathfinder and Berti, and the deployment
pragmatism with DART. Our distinction is methodological: none of these works
applies a matched admission policy to both the learned and classical
predictor before attributing performance differences to the model. We
introduce and verify a safety property (Theorem~\ref{thm:safety}), which
none of them declares, and then show via matched controls that the property
belongs to the gate rather than to the network.

\subsection{LLM Inference Memory Optimization}

Our motivating application sits at the center of memory-bound serving: LLM
inference. Attention and the KV cache dominate its memory
traffic~\cite{pope2023,kwon2023vllm}; the community already treats the KV
cache as a first-class storage object with paged
layouts~\cite{kwon2023vllm,kvcache_streamed}, quantization~\cite{kvcache_quant},
batched serving~\cite{orca,sarathi}, and structured-generation
frameworks~\cite{sglang}. Speculative
decoding~\cite{leviathan2023} showed that prediction with verification
works, and recent systems prefetch KV blocks or MoE experts with learned
confidence~\cite{kvcache_lookahead,apex,kvcache_moe}---conceptually our
gate at a coarser granularity, aimed at throughput rather than a no-harm
guarantee.

\subsection{Safety and Confidence in Prefetching}

Safe Policy Improvement with Baseline Bootstrapping (SPIBB)~\cite{laroche2019}
provides formal guarantees that a learned policy is never worse than the
baseline. Learned index structures with B-tree fallback~\cite{kraska2018}
implement a structurally similar pattern: use the learned model when confident,
fall back otherwise. Our median-error gate is the prefetching instantiation of
this predict-then-verify architecture, with the additional empirical finding
that the gate's value is independent of the predictor behind it.
PPF~\cite{ppf} corroborates this independently: a perceptron confidence
gate on SPP yields 3.78\% IPC improvement (ISCA~2019), attributable to
the filter rather than the underlying predictor.

\section{System Design}
\label{sec:design}

\subsection{Simulator Model}
\label{sec:simulator}

We use an in-order, latency-accounting scalar core. This is deliberate:
every number is a deterministic function of the modeled latencies, and a
full pipeline front-end would add complexity without changing qualitative
conclusions. One cycle fetches an instruction. The default L1 is fully
associative, 32~lines of 8~words each, LRU; the matched-control
falsification (\cref{tab:matched}) uses a timed 8-way configuration to
stress realistic contention. A load/store that hits costs one cycle; a
miss stalls the core for $L = 40$ cycles. A prefetch inserts a line at zero
cycle cost, modeling otherwise-idle DRAM bandwidth. Stores are write-through
with a 16-entry store buffer that stalls only on overflow. Arithmetic
latencies are 1/3/8~cycles for ADD/MUL/DIV. Every run records cycles, hits,
misses, hit rate, prefetches issued vs.\ used, IPC, and per-category cycle
totals, persisted with full configuration and provenance.

\subsection{Delta Encoding and MLP Predictor}
\label{sec:predictor}

Let $a_t$ be the address of memory access $t$. We predict the next delta,
\begin{equation}
d_t = a_t - a_{t-1}, \qquad \hat{d}_{t+1} = f_\theta(\mathbf{x}_t),
\end{equation}
from a 6-dimensional feature vector:
\begin{equation}
\mathbf{x}_t = \bigl[\,\text{opcode},\;\text{pc},\;\text{addr},\;\text{delta},\;\text{delta}_{t-1},\;\text{hit/miss}\,\bigr].
\end{equation}
The predictor $f_\theta$ is a 6--32--1 ReLU MLP with Adam: 257~parameters,
${\approx}26\,\mu$s per prediction in pure NumPy. Every memory access closes
the previous training sample and issues one prediction over a 64-sample
replay buffer fitted every 16~samples. Training targets and predictions are
clamped to ${\pm}16$~words: on gather-heavy streams an attention jump moves
thousands of words, and letting that outlier into the regression averages
away the sequential delta the model needs. \Cref{fig:architecture} shows the
full decision and training loop; \cref{alg:gated} gives the per-access
pseudocode.

\begin{figure}[t]
\centering
\begin{tikzpicture}[
  node distance=0.6cm and 0.8cm,
  box/.style={draw, rounded corners, minimum height=0.7cm, minimum width=1.6cm,
              font=\scriptsize, align=center, fill=blue!8},
  decision/.style={draw, diamond, aspect=2, minimum width=1.2cm,
                   font=\scriptsize, align=center, fill=yellow!15},
  io/.style={draw, trapezium, trapezium left angle=70, trapezium right angle=110,
             minimum width=1.2cm, font=\scriptsize, align=center, fill=green!10},
  arrow/.style={-{Stealth[length=2mm]}, thick},
]
\node[io] (input) {$a_t$\\(6 dims)};
\node[box, right=of input] (mlp) {MLP $f_\theta$\\$\hat{d}$};
\node[decision, right=of mlp] (gate) {confident?};
\node[box, above right=0.3cm and 0.7cm of gate] (pf) {prefetch\\$a_t + \hat{d}$};
\node[box, below right=0.3cm and 0.7cm of gate] (noop) {no-op};

\node[box, below=1.2cm of input] (replay) {replay ($\leq$64)\\clamp $d_t \to \pm 16$};
\node[box, right=of replay] (refit) {refit $f_\theta$\\every 16};
\node[box, right=of refit] (errhist) {error history\\median $|\hat{d}-d|$\\$W\!=\!48$};

\draw[arrow] (input) -- (mlp);
\draw[arrow] (mlp) -- (gate);
\draw[arrow] (gate) -- node[above,font=\tiny]{yes} (pf);
\draw[arrow] (gate) -- node[below,font=\tiny]{no} (noop);
\draw[arrow, dashed] (input) |- node[left,font=\tiny,pos=0.25]{close sample} (replay);
\draw[arrow, dashed] (replay) -- (refit);
\draw[arrow, dashed] (refit) -- (errhist);
\draw[arrow, dashed] (errhist) -- (gate);
\end{tikzpicture}
\caption{Architecture of the gated delta predictor. Top: the decision
path---every access is encoded, the MLP proposes a delta, and the gate
either accepts (prefetch) or falls back to no-op. Bottom (dashed): the
online learning loop and error history that feeds the gate.}
\label{fig:architecture}
\end{figure}

\begin{algorithm}[t]
\caption{Gated delta predictor (per memory access $a_t$)}
\label{alg:gated}
\begin{algorithmic}[1]
\REQUIRE predictor $f_\theta$; replay $R$ ($\leq 64$); gate window $W = 48$; sample counter $t$
\STATE $d_t \gets a_t - a_{t-1}$
\STATE $R.\text{append}(\mathbf{x}_{t-1}, \text{clamp}(d_t))$
\STATE $t \gets t + 1$
\IF{$t \bmod 16 = 0$}
  \STATE fit $f_\theta$ on $R$
\ENDIF
\STATE $\hat{d}_{t+1} \gets f_\theta(\mathbf{x}_t)$
\STATE $e_t \gets |\hat{d}_t - d_t|$
\STATE push $e_t$ into error window $E$ (size $W$)
\IF{$\text{gate}(E, \hat{d}_{t+1})$ = confident}
  \STATE prefetch at $a_t + \hat{d}_{t+1}$
\ELSE
  \STATE no-op
\ENDIF
\end{algorithmic}
\end{algorithm}

\subsection{Confidence Gate}
\label{sec:confidence_gate}

The confidence gate decides whether the current prediction is trustworthy
enough to act on. Let $E = (e_1, \ldots, e_W)$ be the window of recent
absolute prediction errors. The gate is:
\begin{equation}
\text{gate}(E, \hat{d}) = \begin{cases}
\text{confident} & \text{if } \text{median}(E) < \tau \cdot |\hat{d}| \\
\text{suppress}  & \text{otherwise}
\end{cases}
\label{eq:gate}
\end{equation}
where $\tau = 1.0$, selected from $\{0.5, 1.0, 2.0\}$ on development seeds
0--9: $\tau = 0.5$ closes the gate too aggressively on learnable streams;
$\tau = 2.0$ opens it too liberally and degrades safety on random traffic.
The median is robust to outliers from attention gathers; scaling by
$|\hat{d}|$ makes the threshold adaptive. On unlearnable streams the median
error stays high, the gate remains closed, and the predictor issues no
prefetches. On learnable streams the error drops quickly and the gate opens.
When traffic switches phase, the error window adapts within $W$ accesses.

\textbf{Hardware cost.} The confidence gate requires a running median over a
$W\!=\!48$ entry error window---an awkward structure for hardware, typically
requiring a sorted insertion buffer or approximate median network. Combined
with the MLP's ${\approx}26\,\mu$s NumPy inference latency, the confidence
gate is practical only as a simulation-level instrument. The regularity gate
(\cref{sec:regularity_gate}) is the hardware-viable alternative; the native
ChampSim evaluation validates it exclusively.

\subsection{Regularity Gate}
\label{sec:regularity_gate}

Prediction-error gating can reject repeated multimodal streams where a
dominant pattern exists but prediction error remains high. We therefore
define a predictor-independent regularity measure:
\begin{equation}
R_t = \frac{\max_d |\{i : d_i = d\}|}{W_r}.
\label{eq:regularity}
\end{equation}
After eight deltas, the gate admits if and only if $R_t \geq 0.35$ for
$W_r = 16$. Window and threshold were chosen on development seeds 0--9; all
confirmatory results use disjoint seeds 10--39.

An exact incremental implementation requires a 16-entry signed 64-bit FIFO
and a 16-entry associative histogram: 2,153 state bits (269.1 bytes, 0.41\%
of the DPC-3 64-KiB budget). This is storage accounting; comparator and
update logic are not costed.

\subsection{Matched Classical Controls}
\label{sec:matched}

The matched-control design is the methodological core of this work. We wrap
the same admission gates around a classical stride predictor
(which predicts the previous delta), creating a \emph{gated stride} control.
Both the MLP and gated stride see the same stream, use the same gate
parameters, and train/adapt while the gate is closed. The only variable that
differs is the address-prediction model.

This eliminates the admission confound: any performance difference between
gated MLP and gated stride is attributable to the predictor, not to the
decision of whether to prefetch.

\section{Safety Analysis}
\label{sec:safety}

\subsection{\texorpdfstring{Gate-Closed Safety (Theorem~\ref{thm:safety})}{Gate-Closed Safety (Theorem 1)}}

The safety argument has two layers: a theorem about the system design and an
empirical verification that the theorem's precondition holds on tested
workloads.

\begin{definition}[Safety]
\label{def:safety}
A gated prefetcher satisfies the safety property on an execution if
$\text{cyc}_{\text{gated}}(w, s) \leq \text{cyc}_{\text{baseline}}(w, s).$
\end{definition}

\begin{theorem}[Gate-Closed Safety]
\label{thm:safety}
If the confidence gate suppresses every prefetch during execution on
workload~$w$ with seed~$s$, then
$\text{cyc}_{\text{gated}}(w, s) = \text{cyc}_{\text{baseline}}(w, s).$
\end{theorem}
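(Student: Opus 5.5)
The plan is a simulation (bisimulation) argument by induction over the cycle-level state of the simulator described in \cref{sec:simulator}. We model the simulator as a deterministic transition system. The architectural and timing state is $\sigma = (\text{pc}, \text{regs}, \text{mem}, \text{L1}, \text{SB}, \text{cyc})$, where L1 denotes tag and LRU contents and SB the store-buffer occupancy. The gated run additionally carries predictor state $\pi = (\theta, R, E, a_{t-1}, \ldots)$. The baseline run is the same machine with the prefetch hook replaced by a no-op. The key structural observation is that $\pi$ is write-only with respect to $\sigma$. By \cref{alg:gated}, the predictor reads the access address, opcode, pc and hit/miss bit. It updates $\theta$, $R$ and $E$, none of which the core consults. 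The only edge from $\pi$ back into $\sigma$ is the line ``prefetch at $a_t+\hat d_{t+1}$'', which inserts a line into L1 at zero cycle cost.

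First we define the projection $\Phi(\sigma,\pi)=\sigma$. We then prove by induction on the number of executed instructions $k$ that $\Phi$ of the gated state after $k$ steps equals the baseline state after $k$ steps. The base case holds because both runs start from the same workload $w$ and seed $s$. For the inductive step we case-split on instruction type.
\begin{itemize}
\item \emph{Non-memory instructions} (ADD/MUL/DIV, control flow): they never invoke the predictor. The update is a function of $\sigma$ alone, with latencies 1/3/8 plus one fetch cycle.
\item \emph{Loads and stores}: the hit/miss outcome, LRU update, stall of $L=40$ cycles, and store-buffer behaviour are functions of $\sigma$ only. The predictor then runs on values copied out of $\sigma$. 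By the hypothesis of Theorem~\ref{thm:safety}, the gate returns \emph{suppress} on every access, so the no-op branch is taken and $\sigma$ is untouched.
\end{itemize}
In both cases the gated transition on $\sigma$ coincides with the baseline transition. Termination occurs at the same $k$ because control flow depends only on $\sigma$. The final $\text{cyc}$ components are therefore equal, which is the claimed equality. It immediately implies the safety property of Definition~\ref{def:safety}.

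The main obstacle is justifying non-interference rigorously rather than by inspection. We must check that no cycle cost is charged for predictor work: the MLP inference, refits every 16 samples, and median computation. We must also check that training and prefetch bookkeeping never touch L1 state, such as LRU recency or a ``prefetched'' bit used later, and that the hit/miss feature is read after the access resolves. We would discharge this by stating these as explicit properties of the simulator model: predictor computation is off the timing path, and prefetch insertion is the unique predictor-to-cache write. The theorem is then conditional on that model, consistent with the zero-cost-prefetch modelling assumption of \cref{sec:simulator}. A secondary subtlety is randomness. If the MLP's initialisation or replay sampling draws from the same RNG stream as workload generation, the streams could diverge. We would require separate seeded generators for the predictor and the workload, or generate the workload trace before execution, so that the memory reference stream is identical in both runs.
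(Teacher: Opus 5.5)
Your proposal is correct and follows essentially the paper's argument: a lock-step induction in which the closed gate means no prefetch line is ever inserted, so the gated and baseline runs see identical cache state, hit/miss outcomes, and latencies at every step. Your version is more explicit than the paper's. It inducts over all instructions rather than only memory accesses, carries store-buffer and register state in the projection, and states as hypotheses the non-interference assumptions the paper's proof uses implicitly: predictor work costs zero simulated cycles, prefetch insertion is the only write from predictor to cache, and the workload stream is independent of predictor randomness.
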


\noindent\textit{Proof.}
By induction on memory accesses. At $t{=}0$ the cache states are identical.
At access~$t$, because the gate is closed, no prefetch line is inserted; the
gated prefetcher executes the same load/store on the same cache state as the
baseline, producing the same hit/miss outcome and latency. Since no cache
mutation occurs beyond what the baseline performs, the two executions remain
in lock-step for all~$t$. \hfill$\square$

\subsection{Scope and Limits}
\label{sec:safety_limits}

The theorem is deliberately simple: it says only that a prefetcher issuing no
prefetches behaves identically to no prefetcher. The non-trivial claim is
that a 257-parameter model's error statistics \emph{reliably trigger} gate
closure on unlearnable streams. The theorem makes the safety argument
structural---independent of workload, seed, or model quality---so the
empirical question reduces to: does the gate close?

\Cref{def:safety} defines safety as
$\text{cyc}_\text{gated} \leq \text{cyc}_\text{baseline}$, permitting a
gate-open execution to qualify if it accelerates.
Theorem~\ref{thm:safety} proves the stronger gate-closed equality ($=$),
which implies~($\leq$).

\textbf{What the theorem does not say.} Gate-closed safety guarantees no
worse than \emph{no prefetching}. It does not guarantee no worse than
\emph{optimal classical prefetching}. On streams where stride or best-offset
achieves $9$--$11\times$ speedup, the gated NN's safety means it does
nothing---which is safe but unhelpful. The theorem is about the gate
mechanism, not about the predictor's quality.

\section{Experimental Setup}
\label{sec:setup}

\subsection{Workloads}
\label{sec:workloads}

All memory-workload footprints far exceed the L1, so misses genuinely
dominate and prefetching has room to win.

\textbf{Synthetic (deterministic):}
\texttt{sequential\_read} (stride~1),
\texttt{strided\_read} (step~8),
\texttt{mixed\_read\_write} (alternating stores/loads),
\texttt{random\_read} (linear-congruential generator),
\texttt{phased\_switch} (alternating predictable and noisy phases), and
\texttt{arithmetic\_mix} (no memory traffic).

\textbf{ML/agent patterns:}
\texttt{kv\_cache\_append} (sequential KV writes plus periodic attention
gathers), \texttt{embedding\_lookup} (dense rows via sparse indices),
\texttt{agent\_rag} (long context scans plus scattered retrieval bursts),
and \texttt{token\_stream}.

\textbf{Real algorithm traces} (merge sort, row-major matrix multiply,
binary search) are captured element-by-element from executing code and
replayed verbatim.

\subsection{Statistical Methodology}

Five local configurations: no prefetching (baseline), stride detector,
next-line, best-offset, and the gated NN. Each synthetic workload runs
30~independent seeds; ML/agent and real traces run 8~seeds. We report
speedup $= \text{cyc}_\text{baseline} / \text{cyc}_\text{config}$, L1 hit
rate, and paired Wilcoxon signed-rank tests for statistical significance.
For matched-control comparisons we additionally report the paired cycle
ratio $B(A, C) = \text{mean}_s\, C_s / A_s$, where $B > 1$ favors
candidate~$A$ over control~$C$, with exact sign tests under Holm
correction.

\subsection{Machine Configuration Sweep}

To verify that conclusions are not artifacts of a single parameter set, we
sweep L1 capacity (16--64~lines), line size (4--16~words), and DRAM latency
(20--80~cycles). The safety property holds across all tested settings:
NN~$=$~baseline on random for every configuration, while stride drops below
baseline on all of them (\cref{fig:sensitivity}).

\subsection{ChampSim Evaluation Protocol}
\label{sec:champsim_protocol}

The \emph{regularity gate} (\cref{sec:regularity_gate}) is implemented as a native
ChampSim runtime L1D prefetcher module
using official \texttt{develop} commit \texttt{e6530c7}~\cite{champsim}.
The confidence gate remains a local-simulator instrument only; the native
study evaluates the regularity gate on the classical stride predictor.
One binary selects no~L1D prefetch, official \texttt{ip\_stride},
gate-disabled matched stride, or gated stride through JSON configuration.
The gate-disabled module must match official \texttt{ip\_stride} exactly on
all recorded output fields.

We evaluate on twenty SPEC CPU2017 programs---the full set of distinct
programs with publicly available DPC-3 SimPoints~\cite{dpc3}---each using
its highest-weight SimPoint. Eleven programs served as the development set;
the remaining nine were frozen as a holdout before observation, with
predictions registered in \texttt{EXTERNAL\_VALIDATION.md} before any
holdout simulation. The primary run warms 50~million and measures
200~million retired instructions with 15-way parallelism on a 16-core host.
One program is one independent unit; SimPoint weights select a trace and are
not replicate counts.

\section{Results: Local Simulator}
\label{sec:local_results}

\subsection{Gate-Closed Safety on Unlearnable Streams}

\begin{table}[t]
\centering
\caption{Canonical run, mean over 30 seeds (cycles $\pm$ 95\% CI, speedup
$\pm$ std). Speedup is the mean of per-seed ratios (baseline cycles / config
cycles), not the ratio of the shown means.
On \texttt{random\_read} the gated NN achieves near-baseline
performance ($1.000\times$) while stride completes in more cycles
($0.98\times$). On \texttt{mixed\_rw} the gate is overly conservative:
the NN stays near baseline while stride achieves $8.69\times$.}
\label{tab:canonical}
\footnotesize
\setlength{\tabcolsep}{3pt}
\begin{tabular}{@{}l rr rr rr@{}}
\toprule
& \multicolumn{2}{c}{baseline} & \multicolumn{2}{c}{stride} & \multicolumn{2}{c}{NN} \\
\cmidrule(lr){2-3} \cmidrule(lr){4-5} \cmidrule(lr){6-7}
Workload & cyc & hit & cyc & sp & cyc & sp \\
\midrule
seq\_read     & 13\,750 & .875 & 4\,039 & 3.40$\times$ & 4\,161$\pm$16 & 3.30$\pm$.04$\times$ \\
strided\_read & 82\,000 & .000 & 4\,117 & 19.92$\times$ & 9\,240$\pm$528 & 9.10$\pm$1.50$\times$ \\
mixed\_rw     & 43\,984 & .500 & 5\,062 & 8.69$\times$ & 43\,867$\pm$0 & 1.003$\pm$.000$\times$ \\
random\_read  & 73\,952 & .103 & 75\,625 & 0.98$\times$ & 73\,931$\pm$35 & \textbf{1.000}$\pm$.001$\times$ \\
phased\_sw    & 43\,629 & .492 & 39\,906 & 1.09$\times$ & 39\,565$\pm$72 & 1.10$\pm$.01$\times$ \\
\bottomrule
\end{tabular}
\end{table}

\begin{figure}[t]
\centering
\includegraphics[width=\columnwidth]{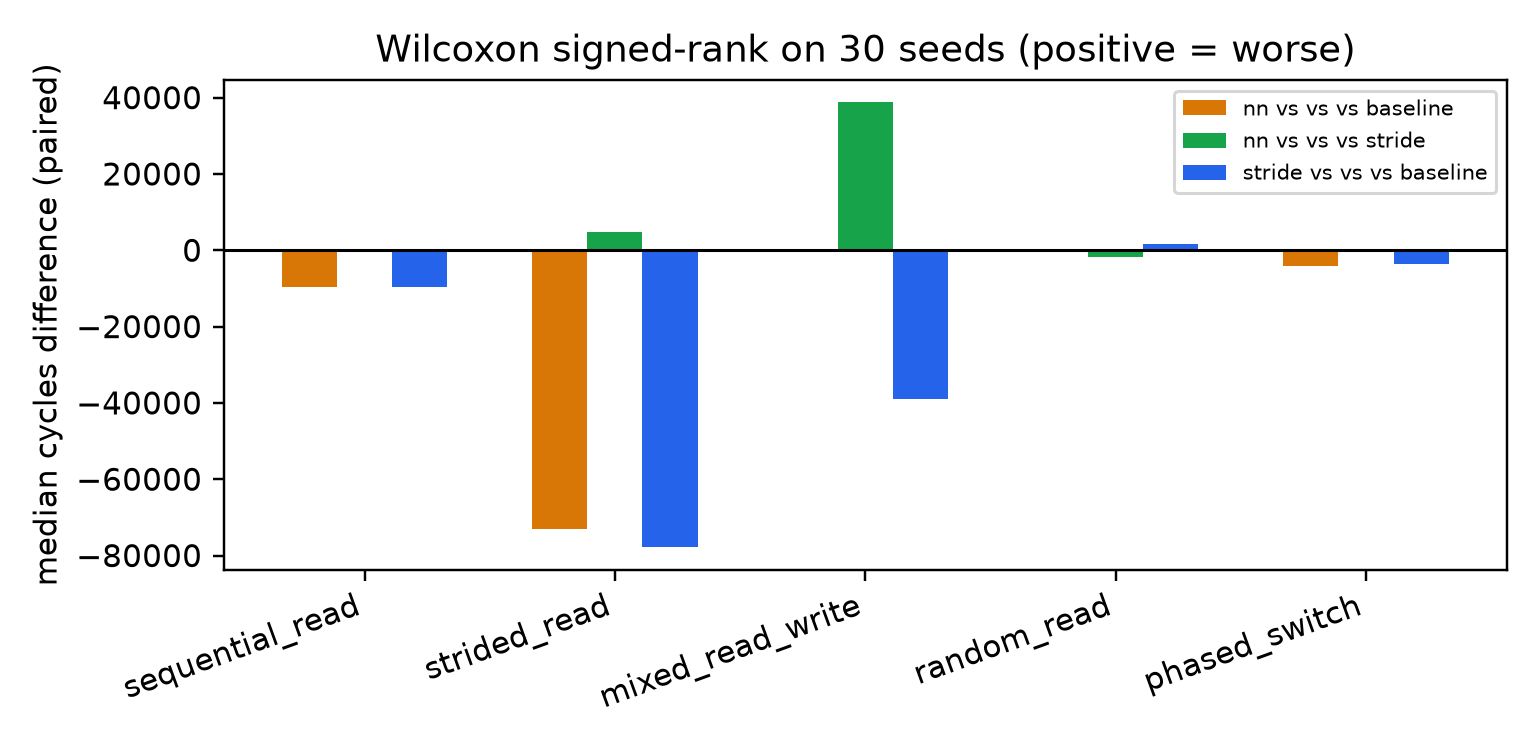}
\caption{Paired Wilcoxon results (30 seeds): median difference in total
cycles per (workload, comparison). A negative value means the first
configuration completes in fewer cycles.}
\label{fig:wilcoxon}
\end{figure}

\begin{figure}[t]
\centering
\includegraphics[width=\columnwidth]{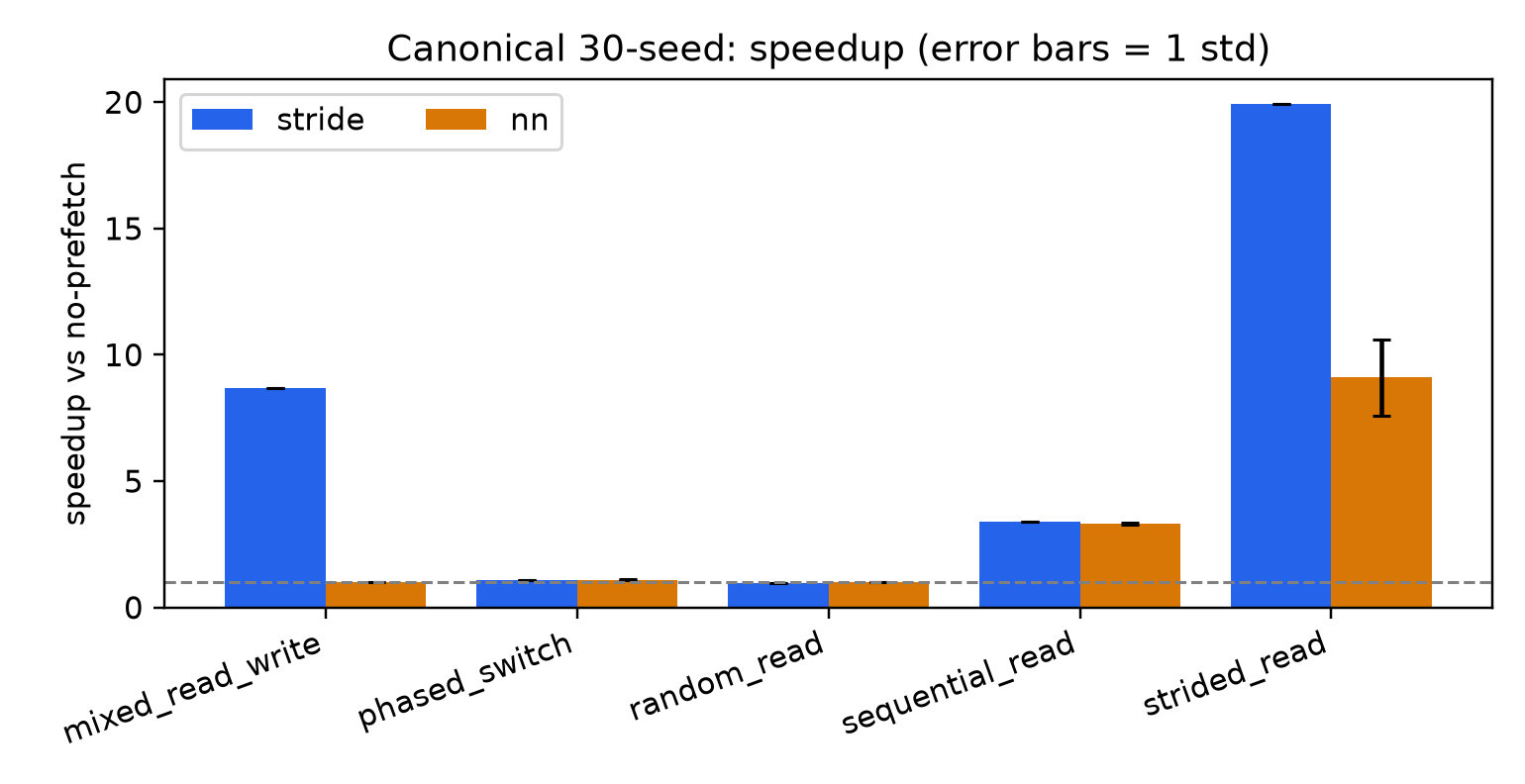}
\caption{Speedup vs.\ no-prefetch (30 seeds, error bars $= 1$ std). The NN
tracks stride on regular streams and rests exactly at $1.0\times$ on random.}
\label{fig:speedup}
\end{figure}

On \texttt{random\_read}, over 30 independent seeds the gated NN achieves
near-baseline performance (mean speedup $1.000\times$, range
$[0.999,\,1.002]$, hit~0.103 vs.\ baseline 0.103); the gate suppresses the
vast majority of prefetches, with residual accuracy of 9\% over negligible
coverage (${<}0.04\%$)
(\cref{tab:canonical,fig:speedup,fig:wilcoxon}). The residual is
statistically detectable (Wilcoxon $p = 0.008$) but practically negligible
(${<}0.03\%$ of total cycles). Under the stride configuration the same
workload completes in \emph{more} cycles ($0.98\times$, paired Wilcoxon
$p < 0.001$) with a lower hit rate (0.082).
Theorem~\ref{thm:safety} guarantees exact baseline reproduction when the
gate closes completely; in practice the gate nearly closes, and the
residual effect is indistinguishable from noise at the application level.

On \texttt{mixed\_rw} the gate is similarly conservative: the NN achieves
only $1.003\times$ while stride reaches $8.69\times$, illustrating that the
gate can suppress prefetching on patterns that are easily learnable by
classical heuristics.

We define margin recovery as $(s_\text{NN} - 1) / (s_\text{stride} - 1)$.
On smooth streams: 96\% (sequential) and 43\% (strided). On
\texttt{mixed\_rw} the gate closes almost entirely, recovering less than
1\% of the stride margin.

\subsection{Phase Switching Recovery}

The \texttt{phased\_switch} workload alternates predictable and noisy phases.
Per-phase hit rates (10~seeds): predictable phases 0.883 (baseline), 0.995
(stride), 0.922 (NN); noisy phases 0.104 (baseline), 0.085 (stride), 0.105
(NN). In noisy phases stride's hit rate falls below baseline; the NN's
equals it---the gate was closed. Aggregate: NN $1.10\times$, stride
$1.09\times$---the gate's closure during noisy phases gives the NN a small
but significant edge (Wilcoxon $p < 0.001$).

\subsection{ML/Agent Patterns}

\begin{table}[t]
\centering
\caption{Prefetcher comparisons on ML/agent patterns and real traces, mean
over 8 seeds. Bold marks head-to-head wins.}
\label{tab:mlagent}
\begin{tabular}{@{}l cccc@{}}
\toprule
Workload & stride & nextline & bestoff & NN \\
\midrule
token\_stream    & 3.40$\times$ & 3.40$\times$ & 3.40$\times$ & 3.26$\times$ \\
agent\_rag       & 2.14$\times$ & 2.14$\times$ & 2.13$\times$ & 1.95$\times$ \\
embed\_lookup    & 1.88$\times$ & 2.01$\times$ & 1.90$\times$ & 1.64$\times$ \\
kv\_cache\_app   & 1.97$\times$ & 1.00$\times$ & 2.76$\times$ & \textbf{1.33}$\times$ \\
random\_read     & 0.98$\times$ & 1.00$\times$ & 0.98$\times$ & \textbf{1.000}$\times$ \\
real\_matmul     & 9.63$\times$ & 11.24$\times$ & 8.46$\times$ & 1.00$\times$ \\
real\_mergesort  & 6.86$\times$ & 8.68$\times$ & 9.04$\times$ & 1.00$\times$ \\
\bottomrule
\end{tabular}
\end{table}

\begin{figure}[t]
\centering
\includegraphics[width=\columnwidth]{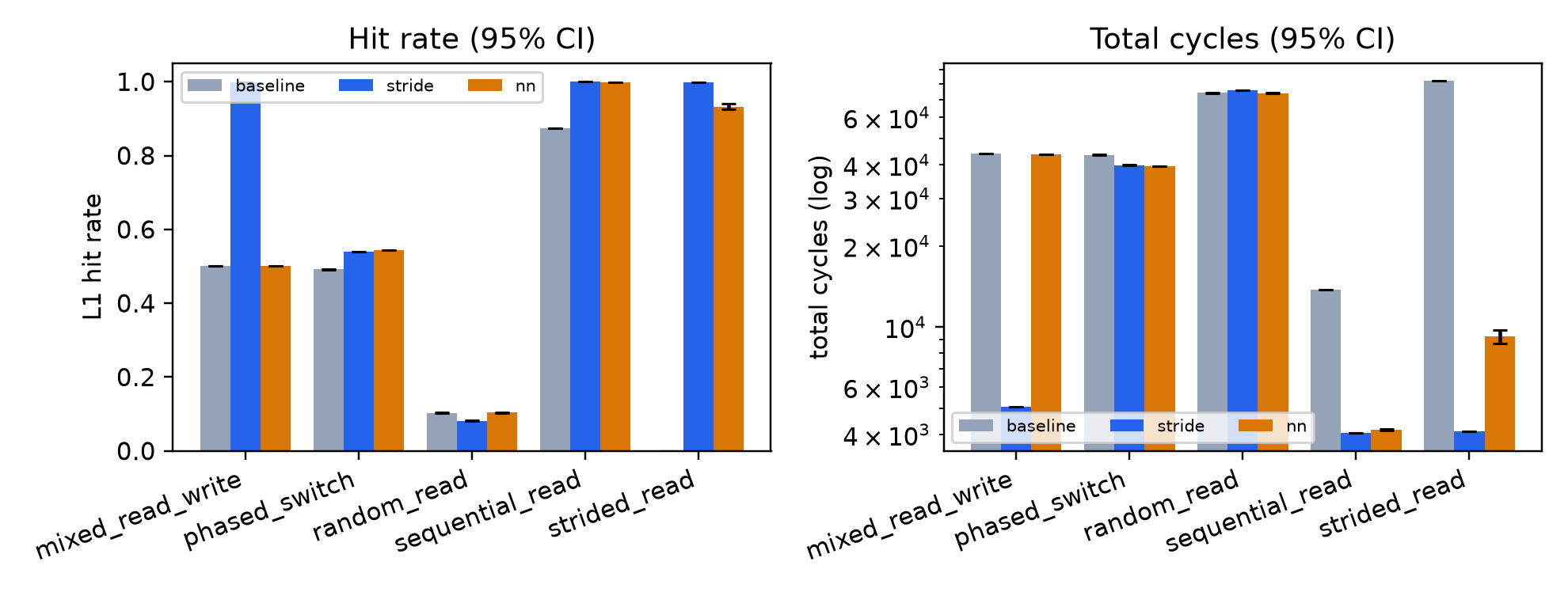}
\caption{Left: L1 hit rate with 95\% CI; \texttt{random\_read} is the only
row where stride drops below baseline. Right: total cycles on a log scale
(95\% CI).}
\label{fig:hitrate}
\end{figure}

\Cref{tab:mlagent} compares all configurations on ML/agent patterns and
real traces (8~seeds). The NN wins head-to-head in two cases: on
\texttt{random\_read} it beats stride and best-offset (both $0.98\times$);
on \texttt{kv\_cache\_append} it beats next-line ($1.33\times$ vs.\
$1.00\times$). On the remaining five workloads the NN underperforms
heuristics: stride or next-line achieve up to $11\times$ higher speedup.
This is the cost of the safety gate's conservatism.

The \texttt{real\_matmul} and \texttt{real\_mergesort} rows expose the
documented limitation: with a 32-word row stride, the replay buffer's
${\pm}16$-word clamp prevents the MLP from representing the pattern, and the
gate correctly identifies unreliable predictions and suppresses prefetching.
Classical mechanisms remain the right tool for dense, large-stride, regular
compute (\cref{fig:hitrate}).

\subsection{Feature Ablation and Gate Design}

\begin{table}[t]
\centering
\caption{Feature ablation, mean of 5 seeds. Removing the absolute address
\emph{increases} speedup; the delta encoding is the critical feature.}
\label{tab:ablation}
\begin{tabular}{@{}l cc@{}}
\toprule
Encoding & strided & mixed \\
\midrule
full          & 8.33$\times$  & 6.15$\times$ \\
no-abs-addr   & 16.96$\times$ & 8.04$\times$ \\
delta-only    & 16.51$\times$ & 8.07$\times$ \\
\bottomrule
\end{tabular}
\end{table}

\Cref{tab:ablation} confirms that the delta encoding is the critical design
choice: removing the absolute address is the only change that
\emph{increases} speedup ($8.3\times \to 17.0\times$ on strided reads).
Replacing the mean error gate with the median and clamping training deltas
to ${\pm}16$~words fixes gather-heavy streams: embedding
$1.00 \to 1.64\times$, KV-cache $1.00 \to 1.33\times$, agent-RAG
$1.84 \to 1.95\times$, with no regression on random or smooth streams.

\subsection{Matched-Control Falsification}
\label{sec:falsification}

\begin{table}[t]
\centering
\caption{MLP vs.\ regularity-filtered stride, local timed 8-way associativity,
prefetch distance $D=16$ cache lines.
$B > 1$ favors the MLP; holdout seeds 10--39.}
\label{tab:matched}
\begin{tabular}{@{}l rr@{}}
\toprule
Workload & $B(\text{MLP}, \text{regstride})$ & Holm result \\
\midrule
random             & 0.999 & n.s. \\
phased             & 0.995 & MLP slower \\
KV-cache append    & 1.101 & MLP faster \\
embedding lookup   & 0.875 & MLP slower \\
agent RAG          & 0.944 & MLP slower \\
strided read       & 0.510 & MLP slower \\
mixed read/write   & 0.128 & MLP slower \\
token stream       & 0.944 & MLP slower \\
\bottomrule
\end{tabular}
\end{table}

\Cref{tab:matched} is the central methodological result. When the same
regularity gate is applied to both the MLP and a stride predictor, the
broad neural advantage disappears. On random traffic the two are
statistically indistinguishable ($B = 0.999$, n.s.). On six of eight
workloads the MLP is significantly slower. The sole MLP win is KV-cache
append ($B = 1.101$), where the bounded regressor preserves a local delta
while ignoring gathers---a conditional mechanism, not evidence of general
superiority.

The conclusion is stark: the safety property observed in
\cref{tab:canonical}---where the ungated NN at $1.000\times$ looked
superior to the ungated stride at $0.98\times$ on random---was a gate
effect, not a prediction effect. With matched admission, the predictor
difference vanishes.

\section{Results: Native ChampSim Validation}
\label{sec:champsim}

\subsection{Correctness Invariant}

The gate-disabled module matches official ChampSim \texttt{ip\_stride}
exactly on every recorded output field (instructions, cycles, cache metrics,
prefetch metrics, DRAM reads/writes) across all twenty programs. The
paired IPC ratio is exactly~1 with twenty ties. The gate comparison is
therefore not an accidental reimplementation artifact.

\subsection{Gate Performance}

\begin{table}[t]
\centering
\caption{Native ChampSim twenty-program study, gate vs.\ matched raw stride.
Reductions are aggregate counts over twenty equal-length programs.}
\label{tab:champsim}
\begin{tabular}{@{}l r@{}}
\toprule
Metric & Gate vs.\ raw stride \\
\midrule
Geometric-mean IPC ratio & 0.996 \\
Bootstrap 95\% CI        & $[0.990,\,0.9997]$ \\
Exact sign test $p$      & 0.481 (n.s.) \\
Win / loss / tie         & 7 / 11 / 2 \\
Accepted L1D prefetches  & $-35.1\%$ (agg.), $-60.5\%$ (med.) \\
L2C prefetch misses      & $-8.4\%$ \\
LLC prefetch misses      & $-7.6\%$ \\
DRAM read requests       & $-0.07\%$ \\
Accepted-prefetch accuracy & $11.0\% \to 15.4\%$ \\
Useful-prefetch retention  & 90.3\% \\
\bottomrule
\end{tabular}
\end{table}

\begin{figure}[t]
\centering
\begin{tikzpicture}
\begin{axis}[
  ybar, width=\columnwidth, height=4cm, bar width=2.5pt,
  ylabel={\small IPC change (\%)},
  ylabel style={at={(0.05,0.5)}},
  symbolic x coords={gcc,mcf,pop2,omn,bwa,xal,cam,per,x264,cac,
                     lbm,wrf,img,fot,rom,exc,dsj,nab,xz,lee},
  xtick=data,
  x tick label style={rotate=60, anchor=east, font=\tiny},
  y tick label style={font=\tiny},
  ymin=-5, ymax=0.5,
  ytick={-4,-3,-2,-1,0},
  extra y ticks={0}, extra y tick style={grid=major, grid style={black,thick}},
  grid=major, y grid style={dashed, gray!40},
  every axis plot/.append style={fill opacity=0.85},
  clip=false,
]
\addplot[fill=black!50, draw=black!70] coordinates {
  (gcc,-4.37) (mcf,-2.92) (pop2,-0.37) (omn,-0.34) (bwa,-0.26) (xal,-0.24)
  (cam,-0.03) (per,-0.02) (x264,-0.02) (cac,-0.01)
  (lbm,0.00) (wrf,0.00) (img,0.00) (fot,0.00) (rom,0.00)
  (exc,0.01) (dsj,0.03) (nab,0.04) (xz,0.06) (lee,0.08)
};
\end{axis}
\end{tikzpicture}
\caption{Per-program IPC change, gated stride vs.\ raw stride (sorted).
Only GCC ($-4.4\%$) and MCF ($-2.9\%$) are outliers; the remaining 18
programs fall within ${\pm}0.4\%$.}
\label{fig:champsim_ipc}
\end{figure}

\Cref{tab:champsim} gives the full twenty-program result. The gate removes
35.1\% of accepted L1D prefetches (60.5\% median per-trace) and improves
accuracy by 4.4 percentage points while retaining 90.3\% of useful
prefetches. The apparent improvement shrinks as it descends the hierarchy:
8.4\% and 7.6\% fewer L2C and LLC prefetch misses, and only 0.07\% fewer
DRAM reads. Raw stride achieves $1.044\times$ no-prefetch IPC ($p = 0.096$);
gated stride achieves $1.040\times$ ($p = 0.004$). The direct gate/raw
geometric-mean IPC is $0.996$ (bootstrap 95\% CI: $[0.990,\,0.9997]$; exact
sign test $p = 0.481$).

\subsection{Per-Program Analysis}

\Cref{fig:champsim_ipc} shows the per-program distribution. Two outliers
dominate: GCC loses 4.37\% and MCF loses 2.92\%---both high-volume prefetch
benchmarks where even ``useless'' prefetches may warm the hierarchy. The
remaining eighteen programs have ratios between 0.996 and 1.001. Across twenty
programs there are seven wins, eleven losses, and two ties; the exact
sign-test $p = 0.481$ (not significant). The bootstrap interval
$[0.990,\,0.9997]$ reflects the outliers' weight; the sign test reflects
inconsistent direction. The aggregate DRAM read change ($-0.07\%$) is well within the frozen
${\pm}1\%$ descriptive margin; seventeen of twenty programs change by
less than $0.5\%$, with the largest per-program change $-2.1\%$
(perlbench, on a small absolute base of ${\sim}10.6$K DRAM reads).

Notably, the gated prefetcher achieves a \emph{stronger} result against the
no-prefetch baseline than raw stride does: $1.040\times$ with $p = 0.004$
(14~wins, 2~losses, 4~ties) versus $1.044\times$ with $p = 0.096$
(13~wins, 5~losses, 2~ties). The gate trades a marginal amount of peak
speedup for substantially more consistent directionality.

\subsection{\texorpdfstring{Interpretation: Proxy $\neq$ Endpoint}{Interpretation: Proxy ≠ Endpoint}}

The correct interpretation is not ``gating saves memory bandwidth.'' Most
suppressed prefetches either do not descend the hierarchy or return later as
demand reads. Accuracy and L1D issue count are useful diagnostics, but they
are not endpoint evidence. Better proxy metrics (fewer prefetches, higher
accuracy) establish neither lower off-chip traffic nor higher performance.

The twenty-program data sharpens this finding. A 35\% aggregate (60\%
median) reduction in issued prefetches yields only 0.07\% fewer DRAM reads.
L2C and LLC prefetch misses drop by 8\%, but this does not translate to
measurable IPC gain. The proxy-to-endpoint gap is the second major finding:
proxy metrics must not substitute for endpoint validation.
This is independently corroborated: the Hermes neural off-chip
predictor~\cite{hermes} increases DRAM transactions by 5--10\% despite
reducing latency, confirming that proxy gains can coexist with endpoint
regression. Mahling et al.~\cite{mahling2025} further document how
hardware prefetching impact varies sharply across in-memory workload
types, reinforcing the need for endpoint-specific validation.

\section{Discussion}
\label{sec:discussion}

\subsection{The Gate Matters More Than the Predictor}

The matched-control experiment (\cref{sec:falsification}) and the ChampSim
validation (\cref{sec:champsim}) converge on one conclusion: the gate is
the mechanism that matters. When evaluated without a gate, the MLP appears
safer than stride on random traffic---but with matched admission, the
difference vanishes. The gate itself is useful: it suppresses a third of
prefetches and improves accuracy. But the MLP behind it is not competitive
with stride on most learnable streams, and the gate's proxy-metric
improvements do not reliably translate to endpoint gains.

\subsection{The Dense-Compute Limitation}

The \texttt{real\_matmul} and \texttt{real\_mergesort} results
(\cref{tab:mlagent}) deserve analysis. On these workloads the access
pattern is a large, fixed stride (32~words for row-major matrix multiply).
Classical prefetchers recognize this pattern instantly; the best
achieves up to 11$\times$ speedup. The MLP's replay buffer, clamped to ${\pm}16$~words,
cannot represent a 32-word stride. The gate correctly identifies unreliable
predictions and suppresses prefetching---safety working as designed, at the
cost of missing workloads where a simple heuristic suffices.

The architectural fix is a stride-aware bypass: detect stable large strides
and delegate to a stride unit without consulting the MLP. This
hybrid---learned predictor for complex patterns, heuristic fallback for
simple ones---is the planned next step.

\subsection{When Admission Control Helps vs.\ Hurts}

Admission control is not universally beneficial. In the local model, the
regularity gate improves performance on random ($1.089\times$), phased
($1.076\times$), and agent-RAG ($1.016\times$) streams relative to raw
stride, with reductions of 99.55\%, 83.07\%, and 20.86\% in accepted
prefetches. But at ideal lookahead $D = 1$ (a local-simulator setting distinct
from the $D = 16$ used in \cref{tab:matched}) the gate slows agent RAG by
1.30\%, and on one deterministic merge-sort trace it is 5.95\% slower.

The twenty-program ChampSim study refines the picture. The gate improves
\emph{directional consistency}: against the no-prefetch baseline, gated
stride wins on 14 of 20 programs ($p = 0.004$) while raw stride wins on
only 13 ($p = 0.096$). The gate converts three raw-stride losses into wins
or ties, at the cost of slight regression on two high-volume benchmarks
(GCC, MCF). The aggregate IPC cost is 0.42\%, with a bootstrap CI upper bound
within 0.05\% of parity ($[0.990,\,0.9997]$).

The trade-off is between prefetch volume and accuracy on one hand, and
performance on the other. Reducing prefetches is not intrinsically valuable
unless the freed bandwidth or cache capacity translates to lower latency.
The gate's value lies in consistency and reduced pollution, not in raw speed.

\subsection{Implications for LLM Inference Prefetching}

The motivating scenario---mixed-phase LLM inference workloads---remains
the most promising target for confidence-gated admission. KV-cache appends
are sequential and predictable; attention gathers are irregular. A gate
that opens on the former and closes on the latter avoids cache pollution
from the irregular phase without sacrificing the sequential benefit. The
local simulator confirms this: the gated NN achieves $1.33\times$ on
KV-cache append while maintaining $1.000\times$ on random.

However, the ChampSim results caution against overconfidence: SPEC CPU2017
programs do not exhibit this mixed-phase structure, and the gate's benefit
did not materialize as endpoint improvement on those workloads. Validating
the gate on real LLM inference traces---with native or RTL-level
simulation---is the critical next step for this application.

\section{Threats to Validity}
\label{sec:threats}

\textbf{Internal validity.} The local simulator is deterministic given a
seed, eliminating measurement noise. The main threat is the simulator's
abstraction level: results reflect modeled latencies, not real silicon. We
mitigate this by testing across a range of machine parameters
(\cref{fig:sensitivity}) and by validating externally with
ChampSim (\cref{sec:champsim}).

\textbf{External validity.} The native study uses one highest-weight
SimPoint per program for twenty SPEC CPU2017 programs (eleven development,
nine holdout), not all SimPoints; one ChampSim configuration; and no
multicore interference. SPEC CPU2017 does not validate the workload-shaped
ML/agent streams. Local workloads are synthetic or captured from simple
algorithms; production traces may expose patterns where the gate behaves
differently.
The native ChampSim study validates the gate mechanism itself (gated vs.\
raw stride) but does not replicate the MLP-vs-stride matched control of
\cref{sec:falsification} externally; extending \cref{tab:matched}'s
comparison to native SPEC CPU2017 traces is left to future work, and
readers should not extrapolate the local matched-control finding to the
native setting without that validation.

\textbf{Construct validity.} We use speedup and L1 hit rate as proxies for
prefetcher quality locally, and IPC plus DRAM reads as endpoints in
ChampSim. The ChampSim results demonstrate precisely the risk of proxy
metrics: improved accuracy did not produce improved IPC. Energy, bandwidth
waste, and multi-tenant interference are not captured.
Additionally, our matched-control result concerns a small (257-parameter)
online delta-regressor trained from scratch on each trace; whether the
finding generalizes to larger, offline-pretrained architectures such as
Pythia~\cite{pythia}, TransFetch~\cite{transfetch}, or DART~\cite{dart}
remains an open question that this paper does not address.
Finally, our matched-control comparisons gate only the stride predictor;
next-line and best-offset in \cref{tab:mlagent} remain ungated. The claim
that the gate is useful independent of the underlying predictor is
therefore supported by one classical baseline (stride), not by a broader
sweep of gated classical predictors.

\textbf{Statistical conclusion validity.} All local significance tests are
paired Wilcoxon signed-rank (non-parametric). We do not apply Bonferroni
correction for multiple comparisons; with 5 workloads $\times$ 4
comparisons the adjusted $\alpha$ would be $0.05/20 = 0.0025$, and our
reported $p$-values supporting our primary claims ($< 0.001$) survive
this correction; the practically negligible residual on \texttt{random\_read}
($p = 0.008$) does not, consistent with its negligible effect size ($< 0.03\%$ of cycles). ChampSim results
use exact sign tests with Holm correction over the comparison family.

\textbf{Implementation validity.} Custom cache-callback counts in ChampSim
can differ by up to 0.105\% even with identical retired-instruction counts,
because ChampSim changes phase at core retirement while requests remain in
flight. We report this diagnostic but do not use it as an experimental unit.
Exact raw/official endpoint equivalence remains the hard invariant.

\textbf{Model capacity.} The MLP has only 257~parameters; a larger network
might close the gap with stride on learnable traffic. However, our
finding---that the gate, not the predictor, produces the safety
property---is capacity-independent; whether the relative performance
of MLP vs.\ gated stride would change with a larger model remains open.

\begin{figure}[t]
\centering
\includegraphics[width=\columnwidth]{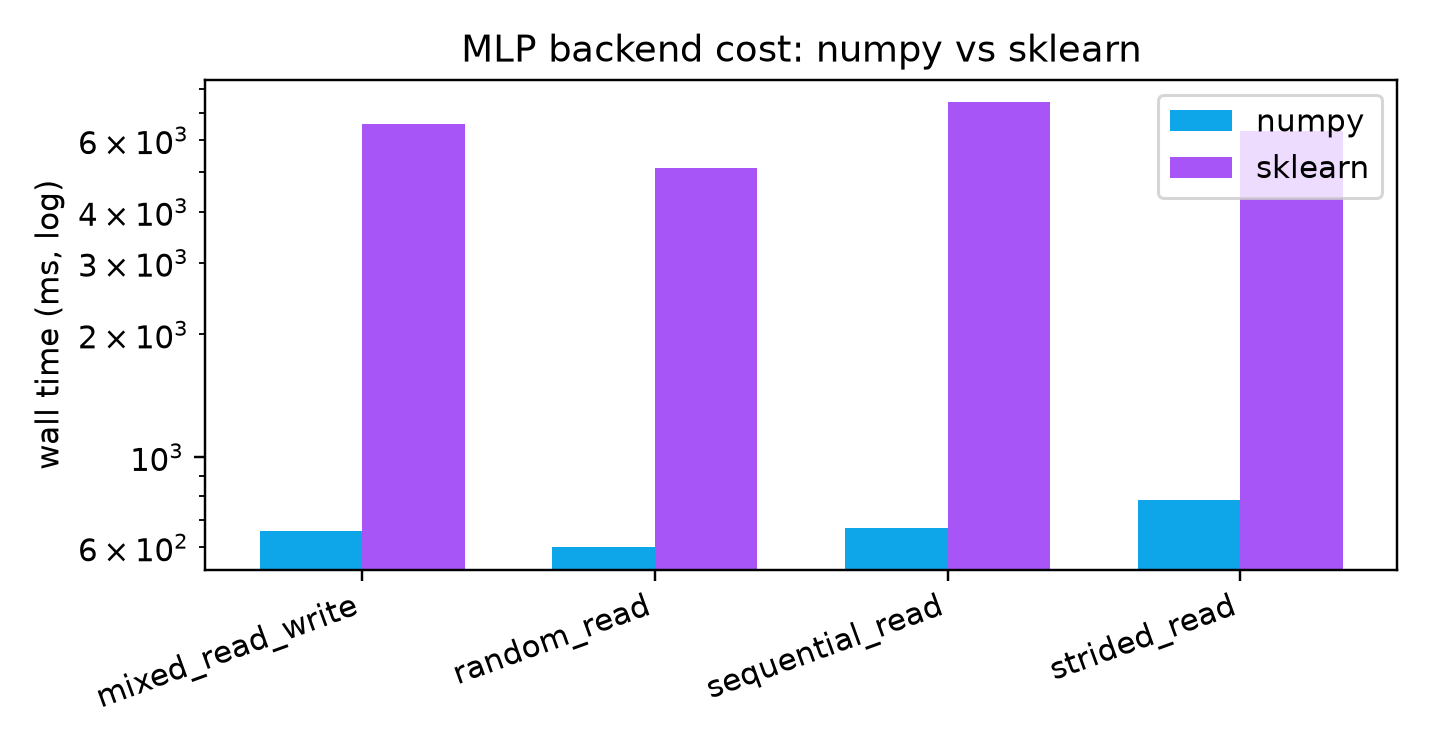}
\caption{Backend wall-time comparison: pure-NumPy vs.\ scikit-learn MLP at
statistically identical hit rates.}
\label{fig:backend}
\end{figure}

\begin{figure}[t]
\centering
\includegraphics[width=\columnwidth]{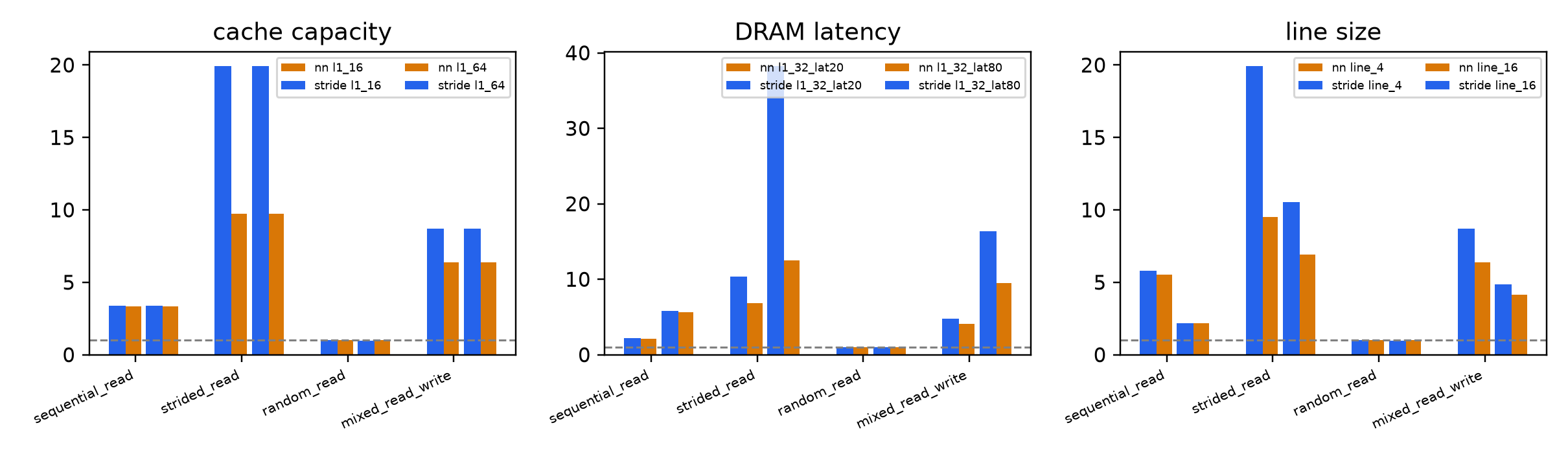}
\caption{Machine sensitivity: NN vs.\ stride speedup for two settings of
cache capacity, DRAM latency, and line size. The NN's safety on random
persists across all settings.}
\label{fig:sensitivity}
\end{figure}

\section{Conclusion}
\label{sec:conclusion}

Three findings emerge. First, comparing a gated learned prefetcher against
an ungated classical baseline confounds prediction with admission. The
neural advantage on random traffic is an admission-policy effect; future
evaluations of learned prefetchers should use matched controls before
attributing performance to the model.

Second, the confidence gate is architecturally useful independent of the
predictor---but the twenty-program ChampSim study reveals a measurement
lesson: a 35\% prefetch reduction (60\% median) yields only $0.07\%$ fewer
DRAM reads. Proxy improvements must not substitute for endpoint validation.

Third, the gate's practical value is directional consistency: $14/20$
programs beat no-prefetch ($p = 0.004$) versus $13/20$ for raw stride
($p = 0.096$), trading $0.4\%$ peak IPC for fewer programs where
prefetching hurts.
We note this directional pattern is descriptive rather than statistically
conclusive: the direct gate-vs-raw-stride comparison itself is not
significant (geometric-mean IPC ratio $0.996$, exact sign test
$p = 0.481$). We report the win/loss split as suggestive of improved
consistency, not as an independently established effect.

The path forward is a better gate---handling dense large-stride patterns
and adapting to workload phase---validated on LLM inference traces with
RTL-level energy characterization.

\section*{Reproducibility Statement}
\label{sec:reproducibility}

Every experiment persists \texttt{config.json} and \texttt{manifest.json}
(git revision, library versions, host) next to raw per-seed
\texttt{runs.csv} and aggregated \texttt{summary.csv}. One command reruns
the full local battery (\texttt{python benchmarks/paper\_experiments.py}).
Machine-verifiable checks:
{\small\begin{verbatim}
python -m benchmarks.verify_paper
python -m benchmarks.verify_research
python champsim/verify_results.py \
  --results results/champsim_full_20
\end{verbatim}}
The last command reconstructs every native aggregate from the
\texttt{champsim\_full\_20} artifact (20~traces $\times$ 4~configs = 80~raw runs),
checks official-control equivalence, and evaluates the frozen predictions.
The local artifact (\texttt{results/research\_gate\_validation/runs.csv})
has 21,240~raw rows. The local and native artifacts each record source,
configuration, unit, and result provenance. Native metadata additionally
pins the ChampSim revision and binary, module, trace-list, and trace hashes.
The local MLP backend (pure NumPy) achieves statistically identical hit
rates to a scikit-learn reference implementation (\cref{fig:backend}).
All code, data, and verification scripts are publicly available at
\url{https://github.com/algosolltd/nncpu}.


\begin{thebibliography}{30}

\bibitem{wulf1995}
W.~A. Wulf and S.~A. McKee,
``Hitting the memory wall: Implications of the obvious,''
\emph{ACM SIGARCH Computer Architecture News}, vol.~23, no.~1, pp.~20--24, 1995.

\bibitem{chen1995}
T.-F. Chen and J.-L. Baer,
``Effective hardware-based data prefetching for high-performance processors,''
\emph{IEEE Trans.\ Computers}, vol.~44, no.~5, pp.~609--623, 1995.

\bibitem{kim2016spp}
J.~Kim, S.~H. Pugsley, P.~V. Gratz, A.~L.~N. Reddy, C.~Wilkerson, and Z.~Chishti,
``Path confidence based lookahead prefetching,''
in \emph{Proc.\ MICRO}, 2016, pp.~1--12.

\bibitem{michaud2016bop}
P.~Michaud,
``Best-offset hardware prefetching,''
in \emph{Proc.\ HPCA}, 2016, pp.~469--480.

\bibitem{bingo}
M.~Bakhshalipour, M.~Shakerinava, P.~Lotfi-Kamran, and H.~Sarbazi-Azad,
``Bingo spatial data prefetcher,''
in \emph{Proc.\ HPCA}, 2019, pp.~399--411.

\bibitem{hashemi2018}
M.~Hashemi, K.~Swersky, J.~Smith, G.~Ayers, H.~Litz, J.~Chang,
C.~Kozyrakis, and P.~Ranganathan,
``Learning memory access patterns,''
in \emph{Proc.\ ICML}, 2018.

\bibitem{pythia}
R.~Bera, K.~Kanellopoulos, A.~Nori, T.~Shahroodi, S.~Subramoney, and O.~Mutlu,
``Pythia: A customizable hardware prefetching framework using online reinforcement learning,''
in \emph{Proc.\ MICRO}, 2021.
\newblock DOI: 10.1145/3466752.3480114.

\bibitem{transfetch}
P.~Zhang, A.~Srivastava, A.~Nori, R.~Kannan, and V.~K. Prasanna,
``Fine-grained address segmentation for attention-based variable-degree prefetching,''
in \emph{Proc.\ Computing Frontiers (CF)}, 2022.
\newblock DOI: 10.1145/3528416.3530236.

\bibitem{dart}
P.~Zhang, N.~Gupta, R.~Kannan, and V.~K. Prasanna,
``Attention, distillation, and tabularization: Towards practical neural
network-based prefetching,''
in \emph{Proc.\ IPDPS}, 2024.
\newblock DOI: 10.1109/IPDPS57955.2024.00082.

\bibitem{pathfinder}
L.~Jia, J.~P. McMahon, S.~Gudaparthi, S.~Singh, and R.~Balasubramonian,
``{Pathfinder}: Practical real-time learning for data prefetching,''
in \emph{Proc.\ ASPLOS}, 2024.
\newblock DOI: 10.1145/3620666.3651332.

\bibitem{champsim}
N.~Gober \emph{et al.},
``The Championship Simulator: Architectural simulation for education and competition,''
\emph{arXiv preprint arXiv:2210.14324}, 2022.

\bibitem{dpc3}
``Third Data Prefetching Championship public traces,''
\url{https://dpc3.compas.cs.stonybrook.edu/champsim-traces/speccpu/}.

\bibitem{kwon2023vllm}
W.~Kwon \emph{et al.},
``Efficient memory management for large language model serving with {PagedAttention},''
in \emph{Proc.\ SOSP}, 2023.

\bibitem{pope2023}
R.~Pope \emph{et al.},
``Efficiently scaling transformer inference,''
in \emph{Proc.\ MLSys}, 2023.

\bibitem{kvcache_streamed}
Y.~Sheng \emph{et al.},
``{FlexGen}: High-throughput generative inference of large language models with a single {GPU},''
in \emph{Proc.\ ICML}, 2023.

\bibitem{kvcache_quant}
J.~Lin \emph{et al.},
``{AWQ}: Activation-aware weight quantization for {LLM} compression and acceleration,''
in \emph{Proc.\ MLSys}, 2024.

\bibitem{orca}
G.~Yu \emph{et al.},
``{Orca}: A distributed serving system for transformer-based generative models,''
in \emph{Proc.\ OSDI}, 2022.

\bibitem{sarathi}
A.~Agrawal \emph{et al.},
``Taming throughput-latency tradeoff in {LLM} inference with {Sarathi-Serve},''
in \emph{Proc.\ OSDI}, 2024.

\bibitem{leviathan2023}
Y.~Leviathan, M.~Kalman, and Y.~Matias,
``Fast inference from transformers via speculative decoding,''
in \emph{Proc.\ ICML}, 2023.

\bibitem{kvcache_lookahead}
Y.~Zhao, Z.~Xie, C.~Liang, C.~Zhuang, and J.~Gu,
``Lookahead: An inference acceleration framework for large language models
with lossless generation accuracy,''
\emph{arXiv preprint arXiv:2312.12728}, 2023.

\bibitem{apex}
A.~Kanani, L.~Badawi, and U.~Y. Ogras,
``{APEX}: Adaptive expert prefetching for memory-efficient edge {MoE} inference,''
\emph{arXiv preprint arXiv:2608.11688}, 2026.

\bibitem{kvcache_moe}
Y.~Zhao, R.~Bunescu, A.~Louri, A.~Karanth, and K.~Wang,
``A spatio-temporal expert prefetching framework for efficient {MoE}-based
{LLM} inference,''
\emph{arXiv preprint arXiv:2606.15453}, 2026.

\bibitem{kraska2018}
T.~Kraska, A.~Beutel, E.~H.~Chi, J.~Dean, and N.~Polyzotis,
``The case for learned index structures,''
in \emph{Proc.\ ACM SIGMOD}, pp.~489--504, 2018.

\bibitem{laroche2019}
R.~Laroche, P.~Trichelair, and R.~Tachet~des~Combes,
``Safe policy improvement with baseline bootstrapping,''
in \emph{Proc.\ ICML}, pp.~3652--3661, 2019.

\bibitem{berti}
A.~Navarro-Torres, B.~Panda, J.~Alastruey-Bened\'{e}, P.~Ib\'{a}\~{n}ez,
V.~Vi\~{n}als-Y\'{u}fera, and A.~Ros,
``{Berti}: An accurate local-delta data prefetcher,''
in \emph{Proc.\ MICRO}, 2022, pp.~975--991.
\newblock DOI: 10.1109/MICRO56248.2022.00072.

\bibitem{sglang}
L.~Zheng \emph{et al.},
``{SGLang}: Efficient execution of structured language model programs,''
in \emph{Proc.\ NeurIPS}, 2024.

\bibitem{ppf}
E.~Bhatia, G.~Chacon, S.~Pugsley, E.~Teran, P.~V. Gratz, and D.~A. Jim\'{e}nez,
``Perceptron-based prefetch filtering,''
in \emph{Proc.\ ISCA}, 2019, pp.~1--13.
\newblock DOI: 10.1145/3307650.3322207.

\bibitem{hermes}
R.~Bera \emph{et al.},
``{Hermes}: Accelerating long-latency load requests via perceptron-based
off-chip load prediction,''
in \emph{Proc.\ MICRO}, 2022.
\newblock DOI: 10.1109/MICRO56248.2022.00015.

\bibitem{mahling2025}
F.~Mahling, M.~Weisgut, and T.~Rabl,
``Fetch me if you can: Evaluating {CPU} cache prefetching and its reliability
on high latency memory,''
in \emph{Proc.\ DaMoN}, 2025.
\newblock DOI: 10.1145/3736227.3736231.

\bibitem{fdp2007}
S.~Srinath, O.~Mutlu, H.~Kim, and Y.~N. Patt,
``Feedback directed prefetching: Improving the performance and
bandwidth-efficiency of hardware prefetchers,''
in \emph{Proc.\ HPCA}, 2007, pp.~63--74.
\newblock DOI: 10.1109/HPCA.2007.346185.

\end{thebibliography}
\end{document}